\newenvironment{proof}{{\em Proof:}}{\hfill{\hfill\rule{2mm}{2mm}}\\}
\newtheorem{theorem}{Theorem}
\newtheorem{corollary}[theorem]{Corollary}
\newcommand{\reals}{{\rm I\!\hspace{-0.025em} R}}
\newcommand{\mNNG}{\ensuremath{\textrm{NNG}}}
\newcommand{\mDT}{\ensuremath{\textrm{DT}}}
\begin{document}

\title{
	Delaunay Triangulations in Linear Time? (Part I) \thanks{This was originally part of a manuscript containing further material now in~\cite{b-cdts-09}. For the problem at hand, we found an algorithm avoiding the use of the history of the construnction~\cite{bm-dtotm-09}. This research was supported
by the Deutsche Forschungsgemeinschaft within the European graduate program 'Combinatorics, Geometry, and Computation' (No.\ GRK 588/2) and
by the Netherlands' Organisation for Scientific Research (NWO) under BRICKS/FOCUS project no. 642.065.503.  }
 }

\author{
      {Kevin Buchin} \thanks{Dept.\ of Information and Comp.\ Sci., Utrecht Univ., Netherlands;
\textsl{buchin@cs.uu.nl}.}
}

\date{}

\maketitle

\begin{abstract}
We present a new and simple randomized algorithm for constructing the Delaunay triangulation using nearest-neighbor graphs for point location. Under suitable assumptions, it runs in linear expected time for points in the plane with polynomially bounded spread, i.e., if the ratio between the largest and smallest pointwise distance is polynomially bounded. This also holds for point sets with bounded spread in higher dimensions as long as the expected complexity of the Delaunay triangulation of a sample of the points is linear in the sample size.
\end{abstract}

%\section{Introduction}
Chan and P{\v a}tra{\c s}cu~\cite{cp-pl-0x, cp-vd-07} presented
$o(N \log N)$ randomized algorithms for constructing Voronoi Diagrams of points in the plane (from which the Delaunay triangulation can be computed in linear time and vice-versa) under suitable models of computation. Here we present an $O(N)$ randomized algorithm for the Delaunay triangulation in the plane in a different model.
The algorithm is not restricted to two dimensions and it runs in linear expected time as long as the expected complexity of the Delaunay triangulation of a random sample of the input points is linear in the sample size. An example of linear complexity Delaunay triangulation are suitably sampled $(d-1)$-dimensional polyhedra in $\reals^d$. Our algorithm locates points by combining the history (i.e., the Delaunay tree~\cite{BT86,bt-rcdt-93}) of a randomized incremental construction with a sequence of nearest-neighbor graph computations. For the nearest-neighbor graphs we use a recent result by Chan~\cite{c-wspd-08} that links well-separated pair decompositions to sorting. By the use of radix sort this results in a linear time algorithm for well-separated pair decompositions and as a consequence for nearest-neighbor graphs. We will use the same assumptions as Chan on the model of computation and the point set. The model of computation is a real-RAM with a constant-time \emph{restricted} floor function that can be applied only if the resulting integer has $O(\log N)$ bits. Restricting the floor function avoids issues about creating an unreasonably powerful model of computation. The input point set should have polynomially bounded spread, i.e., the ratio of the largest and smallest point to point distance should be bounded by a polynomial in the size of the point set. But also other combinations of models of computation and sorting algorithms can be used, resulting in a optimal running time asymptotically bounded by the time needed for sorting (see~\cite{c-wspd-08} for details).

%\section{Incremental Constructions Using Nearest-Neighbor Graph}
%\subsection{Description}
\begin{figure*}
\linesnotnumbered
\begin{algorithm}[H]
\dontprintsemicolon
%\SetKwFor{ForEach}{In each}{do}{end}
\KwIn{Finite point set $P$ in $\reals^d$}
\KwOut{Delaunay triangulation of $P$}
\BlankLine
\nlset{1}Split $P$ into rounds $R_1,\ldots R_m$ of doubling size with $R_1$ of constant size, let $S_j := \bigcup_{1\leq i\leq j} R_i$ $(j=1\ldots m)$.\;
\nlset{2}Insert points in $R_1$ into the Delaunay triangulation using history for point location.\;
\nlset{3}For $k=2,\ldots,m$ insert points in $R_k$ into the Delaunay triangulation as follows:\;
\Indp
\nlset{3.1} Set $T_{k-1}\leftarrow R_k$, $T_i \leftarrow \emptyset$ ($0\leq i <k-1$), and $j \leftarrow k-1$.\;
\nlset{3.2} While $T_j \neq \emptyset$ and $j>0$: compute $\mNNG (T_j \cup S_j)$ and from each connected component with no vertex in $S_j$ add the first point of the component to $T_{j-1}$; then set $j\leftarrow j-1$.\;
%\nlset{3.3} If $j=0$ and $T_0 \neq \emptyset$: Locate $T_0$ in $\mDT (S_1)$ using the history; j \leftarrow 1\;
\nlset{3.3} While $j<k-1$: locate $T_j$ (if not empty) in $\mDT (S_{j+1})$ using history starting at $\mDT (S_j)$; then locate $T_{j+1}$ in $\mDT (S_{j+1})$ by walking through the connected components starting at an already located point; then set $j\leftarrow j+1$\;
% For $i=j+1,\ldots,k-1$: locate $T_{i-1}$ in $\mDT (S_i)$ using history starting at $\mDT (S_{i-1})$; then locate $T_i$ in $\mDT (S_i)$ by walking through the connected components starting at an already located point.\;
%\Indm
\caption{Incremental Construction using Nearest-Neighbor Graph}
\label{alg:nng}
\end{algorithm}
\linesnumbered
\vspace*{-0.6cm}
\end{figure*}

\paragraph{General Setup.}
We construct the Delaunay triangulation of a finite point set $P\subset \reals^d$ by a randomized incremental construction using a history. The point location is accelerated by locating points at intermediate levels in the history instead of the top, see Algorithm~\ref{alg:nng}.
Given an insertion order we group the points into \emph{rounds} $R_1,\ldots, R_m$ in accordance with the order, i.e., the points in $R_i$ are in the insertion order before the points in $R_{i+1}$ for $1\leq i < m$. The rounds double in size, i.e., $|R_1|$ is constant, and $|R_{i+1}| = 2 |R_i|$ (with possibly the exception of the last round for which $|R_m| \leq 2 R_{m-1}$). Let $S_j := \bigcup_{1\leq i\leq j} R_i$ denote the points inserted in or before round $j$.
Together with the history graph we store the Delaunay triangulations of the $S_j$. Note that the rounds are only used for facilitating the point location; the insertion order remains the same.
%Points are inserted in rounds of doubling size, the first round having constant size.

\paragraph{Point Location in a Round.}
The points of the first round are located in the standard way using the history. At the beginning of round $k$ ($2\leq k \leq m$) the points of the round $R_k$ are located in the Delaunay triangulation of $S_{k-1}$ using a family of sets $T_1, \ldots, T_{k-1}$ (in every round a different family, thus more formally the family could be written as $T_1^k, \ldots, T_{k-1}^k$):
%At the beginning of a round, we locate the points in the following way.
Let $T_{k-1} := R_k$.
We compute the nearest-neighbor graph of $T_{k-1} \cup S_{k-1}$. For connected components of the nearest-neighbor graph without a vertex in $S_{k-1}$ we include the first point (according to the insertion order) of the component in a set $T_{k-2}$.
We repeat the same procedure higher up in the history, i.e., we compute the nearest-neighbor graph of
$T_{k-2}\cup S_{k-2}$, for each connected component without a vertex in $S_{k-2}$ we include the first point in a set $T_{k-3}$, and so on. We stop this process with the construction of $T_0$ (or earlier with $T_j$ if $T_{j-1}$ is empty. For simplicity we describe the algorithm for the case that $T_0$ is not empty).
This yields a hierarchy of sets $T_{k-1}\supset T_{k-2}\supset \dots\supset T_0$.

Now we locate the points in $T_0$ in $\mDT (S_1)$ by using the history, i.e., we use the history to find a conflicting simplex and then locally search for the simplex containing $T_0$. For locating $T_1$ we have the following situation: each connected component of the nearest-neighbor graph of $T_1 \cup S_1$ either has a vertex in $S_1$ or has a vertex in $T_0$, thus each component has a vertex already located in $\mDT (S_1)$. We traverse each component starting at an already located vertex, e.g., by a depth first search. During the traversal we locate the traversed points in $\mDT (S_1)$ by \emph{walking} from an already located neighbor, i.e., we locally traverse the triangulation along the line segment between the two points.
After locating the points in $T_1$ in $\mDT (S_1)$ we locate them in $\mDT (S_2)$ by using the history, starting not at the top of the history but at the simplices of $\mDT (S_1)$ containing the points. By the same procedure we locate the points in $T_2$ in $\mDT (S_2)$ and $\mDT (S_3)$, and so on, until we have located the points in $T_{k-1} = R_k$ in $\mDT (S_{k-1})$. Finally we insert the points of $R_k$ into the Delaunay triangulation, where a new point is located using the history starting at the simplex of $\mDT (S_{k-1})$ containing the point.

% For connected components of the graph that have at least one point already inserted, we locate all new points by a traversing the component starting at an inserted point. For all other connected components, we include the first point (according to the insertion order) of the component in a set $T_{k-2}$.
% \emph{also add some remarks at the end, what and what not is essential}

%\subsection{Analysis}
\paragraph*{Analysis.}
In the analysis of the algorithm we will assume that the expected complexity of the Delaunay triangulation of a random sample of the given point set is linear in the size of the sample. This is the case for points in the plane, but it is also a realistic assumption for points sampled from a $(d-1)$-dimensional surface in $\reals^d$. The analysis can be adapted to the case where this assumption does not hold, yielding additional terms depending on the complexity of the triangulation.
In the following theorem we bound the run-time in terms of the cost of computing the nearest-neighbor graph. Note that
this bound holds for the standard real-RAM model and with no assumption about the spread of the point set.

\begin{theorem}
Let $P \subset \reals^d$ be a set of $N$ points in general position such that the expected complexity of the Delaunay triangulation of a random sample $R$ of $P$ of size $r$ is in $O(r)$. Algorithm~\ref{alg:nng} constructs the Delaunay triangulation of $P$ given in a random order in expected time
$ F(N) + \sum_{i=1}^{m-2}{(m-i) F( 2^i c)} + O(N)$,
where $c$ is the (constant) size of the first round, $m= \lceil \log_2 (N/c + 1) \rceil$, and
%
% $
% O( N+\sum_{i=0}^{\lfloor \log_2 (N) \rfloor}{(i+1) F( N/2^i)} ),
% $
$F(k)$ denotes the time needed to compute the nearest-neighbor graph of a subset of $P$ of size $k$.
% \[
% O(\sum_{i=0}^{\lfloor \log_2 (N) \rfloor}{(i+1) \left[ N\left( \frac{N}{2^i}\right) + \frac{N}{2^{i+1}}L\left( \right]
% \]
\end{theorem}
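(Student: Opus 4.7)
The plan is to decompose the expected running time into four components: (i) the nearest-neighbor graph computations, (ii) history descents between $\mDT(S_j)$ and $\mDT(S_{j+1})$, (iii) walks inside a fixed Delaunay triangulation along segments between NNG-adjacent points, and (iv) the insertions themselves. The workhorse is a deterministic structural bound $|T_j^k| \leq 2^j c$. Because $P$ is in general position, every vertex of any nearest-neighbor graph on a subset of $P$ has a unique nearest neighbor, so every connected component of such an NNG contains at least two vertices. A connected component of $\mNNG(T_j^k \cup S_j)$ without a vertex in $S_j$ thus consists of at least two vertices from $T_j^k$ and contributes exactly one representative to $T_{j-1}^k$, giving $|T_{j-1}^k| \leq |T_j^k|/2$. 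Iterating downward from $|T_{k-1}^k| = |R_k| = 2^{k-1} c$ yields the claimed bound, and hence the NNG at a middle level $j < k-1$ of round $k$ has size $O(2^j c)$, while the top-level NNG ($j = k-1$) of round $k$ has size $|S_k| = O(2^k c)$.

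For (i), I group the NNG computations by input size $\Theta(2^i c)$. The top-level NNG of round $m$ has size $\Theta(N)$, yielding the $F(N)$ term. For each $i \in \{2,\ldots,m-1\}$ there are at most $m-i$ computations of size $\Theta(2^i c)$: one as the top-level NNG of round $i$ and $m-i-1$ as middle-level NNGs of later rounds $k \geq i+2$. The $i=1$ level contributes $m-2$ computations of size $\Theta(2c)$, and lower-order constant-sized NNGs add up to $O(m) = O(\log N)$, all absorbed into $O(N)$. Summed and bounded by monotonicity of $F$ this matches the $F$-dependent part of the theorem. For (ii), a standard Clarkson--Shor backwards analysis on the doubling sample pair $S_j \subset S_{j+1}$ shows that the expected cost of descending the history from the simplex of $\mDT(S_j)$ containing a point to the one in $\mDT(S_{j+1})$ is $O(1)$, so the total over $\sum_{k,j}|T_j^k| = O(N)$ location events is $O(N)$. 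For (iv), the hypothesis that $\mDT$ of a random sample of size $r$ has expected complexity $O(r)$, combined with the standard backwards analysis of randomized incremental construction, gives $O(N)$ expected insertion cost.

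The most delicate component is (iii). A walk from an already located vertex $u$ to an NNG-neighbor $v$ follows the segment $\overline{uv}$, whose length is at most the nearest-neighbor distance of $u$ in $T_{j+1}^k \cup S_{j+1}$; this is short relative to the local Delaunay features of $\mDT(S_{j+1})$, so only $O(1)$ simplices of $\mDT(S_{j+1})$ are crossed in expectation, giving total walking cost $O(\sum_{k,j}|T_{j+1}^k|) = O(N)$. The principal obstacle is making this precise: one must bound the expected number of Delaunay edges of $\mDT(S_{j+1})$ crossed by an NNG edge of $T_{j+1}^k \cup S_{j+1}$, leveraging that $S_{j+1}$ is a random subset and that NNG edges are short compared to the local Delaunay feature size of $S_{j+1}$. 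With this step in hand, summing contributions (i)--(iv) yields the stated bound.
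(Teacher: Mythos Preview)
Your accounting for the nearest-neighbor graph computations in (i) and for the insertion cost in (iv) is correct and matches the paper. The genuine gap is in (ii), and it propagates to (iii).

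The set $T_j$ is \emph{not} a random subset of $R_k$: which points survive into $T_j$ is determined by the connected-component structure of the nearest-neighbor graphs, hence by the geometry of $P$, not merely by the random insertion order. Backwards analysis gives an $O(1)$ expected history-descent cost between doubling levels only when averaged over a random point (equivalently, it bounds the \emph{total} conflict count for a random set of query points). For a fixed point selected by a geometry-dependent rule there is no such bound; e.g., if $P$ consists of points on a circle together with one point $p$ at the centre, every Delaunay triangle of every sample conflicts with $p$, so the history cost for $p$ alone is linear. Nothing in your argument prevents the NNG rule from repeatedly selecting such a pathological point into the $T_j$'s, so the per-point $O(1)$ claim in (ii) is unjustified.

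The paper closes this gap with a coupling trick you are missing: it constructs auxiliary sets $T_j'\subseteq R_k$ so that every point of $R_k$ lies in $T_j\cup T_j'$ with the \emph{same} probability $2^{j-k+1}$, independent of geometry. Then the expected total number of conflicts of $T_j\cup T_j'$ with $\mDT(S_{j+1})$ is $O(2^{j-k+1}|R_k|)$ by the standard random-sampling bound, and this dominates the actual cost incurred by $T_j$. The same device handles (iii): the paper observes that every simplex crossed by an NNG edge $pq$ is in conflict with $p$ or with $q$, reduces the walk cost to the total conflict count of $T_{j+1}$ with $\mDT(S_{j+1})$ (plus the linear complexity of $\mDT(S_{j+1})$ for edges starting in $S_{j+1}$), and then applies the $T_{j+1}'$ padding again. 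Your ``short compared to local Delaunay features'' heuristic, even if made precise, would still run into the same non-randomness obstacle; the re-randomisation via $T_j'$ is the key idea needed to make both (ii) and (iii) go through.
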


\begin{proof}
%We refer to Algorithm~\ref{alg:nng}  the
We will analyse the cost of Step~3.3. Step~1 is only a conceptual step and Step~2 takes constant time. Step~3.1 takes constant time per loop (of Step~3). For a given $k$ the nearest-neighbor graphs of $T_{k-1}\cup R_{k-1}$, $T_{k-2}\cup R_{k-2}$, \ldots, $T_1\cup R_1$ (or possibly fewer) are computed in Step~3.2. The size of these sets are bounded by
%$N/2^{m-k}, N/2^{m-k+1}$,\ldots,$N/2^{m-1}$, where $m = \log \lceil N/|R_1| + 1\rceil$.
$2^kc,2^{k-1}c,\ldots,2c$, where $c=|R_1|$ (except for $k-1 = m-1$, where $|T_{k-1}\cup R_{k-1}| = N$). Summing up over the loop of Step~3 this yields a cost of $F(N) + \sum_{i=1}^{m-2}{(m-i) F( 2^i c)}$ with $m= \lceil \log_2 (N/c) +1 \rceil$.

We now bound the cost of Step~3.3 for a given round $(k>1)$. The size of $R_k$ is $c2^{k-1}$. It suffices to prove that the cost of Step~3.3 is in $O(|R_k|)$. For this we construct sets $T_{k-1}',\ldots,T_0'\subset P$ such that any point of $R_k$ has the same probability to be included into $T_i \cup T_i'$ $(0\leq i \leq k-1)$. Let $T_{k-1}':= \emptyset$. We construct $T_i'$ $(k-1>i\geq 0)$ as follows: First we add each point of $T_{i+1}'$ with probability $1/2$. Second from each connected component in $\mNNG (T_{i+1}\cup S_{i+1})$ with a vertex in  $S_{i+1}$ we add each point of $T_{i+1}$ with probability $1/2$. Note that the choices do not need to be independent. For all other connected components we add each point of $T_{i+1}$ excluding the first and second (in the insertion order) with probability $1/2$. Note that for a connected component all points have the same probability of being the first point of the component (in this case it is included in $T_i \subset T_i\cup T_i'$). Likewise all points have the same probability of being the second point (in this case it is not included), and likewise the same probability that it is one of the remaining points (in this case it is included with probability $1/2$). Overall we get for all $i$ that any point of $R_k$ is included into $T_i\cup T_i'$ with probability $2^{i-k+1}$ and the expected size of $T_i\cup T_i'$ is $|R_k|2^{i-k+1}$.

We first bound the cost of locating a point $p \in T_j$ in $\mDT (S_{j+1})$. Since in the previous step we located $p$ in $\mDT (S_j)$, we can locate $p$ using the history starting at a conflicting simplex of $p$ in $\mDT (S_j)$. Since $S_{j+1}$ is a random subset of $P$ and the points of $S_{j+1}$ were inserted in a random order, the expected cost of locating $p$ would be $O(\log (|S_{j+1}|/|S_j|))=O(1)$ if $p$ were a random point of $R_k$~\cite{d-rysa-92}. This is not the case, but for a random point of $T_j \cup T_j'$ it would be the case. The cost of locating all points of $T_j$ in $\mDT (S_{j+1})$ is bounded by the cost of locating all points of $T_j \cup T_j'$ in $\mDT (S_{j+1})$ (knowing a conflict in $\mDT (S_j)$ for each point). The expected cost of this is in $O(E(T_j \cup T_j')) = O(2^{j-k+1}|R_k|)$.

This gives us the expected cost of locating one conflicting simplex for each point $p \in T_j$. We actually need to find the simplex in $\mDT (S_{j+1})$ containing the point $p$. This can be done by locally searching all conflicting simplices, one of which contains the point. The cost of this is therefore proportional to the number of conflicts a point has with $\mDT (S_{j+1})$. The cost for all points in $T_j$ can again be bounded by the total number of conflicts of $T_j \cup T_j'$ with $\mDT (S_{j+1})$ which is expected to be in $2^{j-k+1}O(|R_k|)$.

Second we bound the cost for locating the points of $T_{j+1}$ in $\mDT (S_{j+1})$. For locating a point of $T_{j+1}$, we traverse the connected components of $\mNNG (T_{j+1} \cup S_{j+1})$. For each component we start at a point for which we know the location in $\mDT (S_{j+1})$, i.e., a point from $S_{j+1} \cup T_j$. Assume we traverse the edge between $p$ and $q$ where $p$ is already located and $q$ needs to be located.
The point $q$ is located by walking along the line segment $pq$, i.e., by traversing the Delaunay triangulation along $pq$. The cost corresponds to the number of intersected simplices. Any simplex intersected is either in conflict with $p$ or with $q$, i.e., $p$ or $q$ lie in its circumsphere. If $p \in S_{j+1}$ we additionally have the cost of searching for the $d$-simplices adjacent to $p$ that is the first simplex on the walk.

Therefore the cost of the walk from $p$ to $q$ is bounded by the number of conflicts of $q$ with simplices of $\mDT (S_{j+1})$ and -- depending on whether $p$ is in $T_{j+1}$ or $S_{j+1}$ -- by the number of conflicts of $p$ or the number of faces at $p$. Any point can occur at most once as end point of a walk.
Furthermore since the degree of $\mNNG$ is in any fixed dimension bounded by the kissing number~\cite{cs-splg-98}, any point occurs only a constant times as starting point of a walk.
Thus the total cost of walking is up to a constant factor bounded by the complexity of $\mDT (S_{j+1})$ and the expected total number of conflicts of $T_{j+1}$ with $\mDT (S_{j+1})$. By assumption the expected complexity of $\mDT (S_{j+1})$ is linear in $|S_{j+1}|= (2^{j+1}-1)c$. The expected number of conflicts of $T_{j+1}$ with $\mDT (S_{j+1})$ we again bound by the expected number of conflicts of $T_{j+1}'$ with $\mDT (S_{j+1})$, which is $O(2^{j-k+2}|R_k|)$. Therefore the total expected cost of the round $R_k$ is in $O(\sum_{j=1}^{k-1} 2^{j-k+1}|R_k|) = O(|R_k|)$. Summing up over all rounds yields an expected linear cost.
\end{proof}

We now use that the nearest-neighbor graph of a point set with bounded spread can be computed in linear time. Note that the condition on the complexity of the Delaunay triangulation always holds in the plane, thus the algorithm computes the Delaunay triangulation of points in the plane with bounded spread in linear expected time.
\newpage
\begin{corollary}
Let $P \subset \reals^d$ be a set of $N$ points in general position with bounded spread such that the expected complexity of the Delaunay triangulation of a random sample $R \subset P$ of size $r$ is in $O(r)$. Algorithm~\ref{alg:nng} constructs the Delaunay triangulation of $P$ given in a random order in linear expected time on a real-RAM with a constant-time floor function restricted to $\log N$ bits.
\end{corollary}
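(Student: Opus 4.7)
The plan is to combine the running-time expression from the Theorem with the linear-time nearest-neighbor graph algorithm cited in the introduction. Under the restricted-floor real-RAM model, Chan's WSPD-based algorithm computes $\mNNG$ of any $k$-point set with polynomially bounded spread in $O(k)$ time, so I aim to substitute $F(k) = O(k)$ into the bound $F(N) + \sum_{i=1}^{m-2}(m-i)F(2^i c) + O(N)$ and show the result is $O(N)$.

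First I would verify that Chan's hypotheses apply to every $\mNNG$ computation performed by Algorithm~\ref{alg:nng}. Each call is made on a subset $T_j \cup S_j \subseteq P$, so its spread is at most the spread of $P$ and therefore polynomial in $N$. Since the restricted floor function is allowed on integers of $O(\log N)$ bits --- exactly the budget needed to index the grid induced by the spread of $P$ --- Chan's routine runs in time linear in the size of the subset, giving $F(k) = O(k)$ in the range required by the Theorem.

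The main calculation is then to bound the weighted geometric sum $\sum_{i=1}^{m-2}(m-i)\, 2^i$. Setting $j = m - i$ recasts this as $2^m \sum_{j=2}^{m-1} j / 2^j$, and the convergence of $\sum_{j \geq 1} j / 2^j = 2$ gives an $O(2^m)$ upper bound. Since $m = \lceil \log_2(N/c + 1) \rceil$, we have $2^m = O(N/c)$, so the sum contributes $O(N)$ overall and the total expected running time is $O(N)$, as required by the Corollary.

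The point to watch most carefully is whether Chan's linear-time $\mNNG$ bound really does apply uniformly to the subsets encountered: the subset spread may exceed any polynomial in the subset's own size even though it is polynomial in $N$, so one cannot quote Chan's theorem naively with $k$ in place of $N$. The resolution is that the running time of Chan's algorithm, under the restricted-floor model, depends on the number of bits used by the floor function, which remains $O(\log N)$ for all subsets of $P$, rather than on the logarithm of the subset's own spread. Once this observation is in place, substitution into the Theorem's bound yields the claimed linear expected running time.
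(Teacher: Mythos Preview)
Your proposal is correct and follows the same route the paper takes: the paper does not give a separate proof of the Corollary but simply remarks, just before stating it, that the nearest-neighbor graph of a bounded-spread point set can be computed in linear time via Chan's result, leaving the substitution $F(k)=O(k)$ into the Theorem's bound implicit. Your explicit evaluation of the sum $\sum_{i=1}^{m-2}(m-i)2^i c = O(2^m c) = O(N)$ and your careful observation that Chan's bound applies to every subset $T_j\cup S_j$ because the $O(\log N)$-bit floor budget is fixed by $P$ (not by the subset size) fill in details the paper leaves to the reader.
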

We would like to note that the analysis can be extended to the case where the Delaunay hierarchy~\cite{d-dh-02} is used instead of a history (the hierarchy is then built level by level) and also to the case of biased randomized insertion orders~\cite{acr-iccb-03,k-ops-07}.

%%%%%%%%%%%%%%%%%%%%%%%%%%%%%%%%%%%%%%%%%%


\begin{thebibliography}{10}

\bibitem{acr-iccb-03}
N.~Amenta, S.~Choi, and G.~Rote.
\newblock Incremental constructions con {BRIO}.
\newblock In {\em Proc. 19th Annu. ACM Sympos. Comput. Geom.}, pages 211--219.
  ACM Press, 2003.

\bibitem{BT86}
J.-D. Boissonnat and M.~Teillaud.
\newblock A hierarchical representation of objects: the {D}elaunay tree.
\newblock In {\em Proc. 2nd}, pages 260--268, 1986.

\bibitem{bt-rcdt-93}
J.-D. Boissonnat and M.~Teillaud.
\newblock On the randomized construction of the {D}elaunay tree.
\newblock {\em Theor. Comput. Sci.}, 112(2):339--354, 1993.

\bibitem{k-ops-07}
K.~Buchin.
\newblock {\em Organizing Point Sets: Space-Filling Curves, {D}elaunay
  Tessellations of Random Point Sets, and Flow Complexes}.
\newblock PhD thesis, Free University Berlin, 2007.
\newblock
  http://www.diss.fu-berlin.de/diss/receive/FUDISS\_thesis\_000000003494.

\bibitem{b-cdts-09}
K.~Buchin.
\newblock Constructing {D}elaunay triangulations along space-filling curves.
\newblock In {\em Proc. 17th Annual European Symposium on Algorithms (ESA)}, pages 119--130, 2009.

\bibitem{bm-dtotm-09}
K.~Buchin and W.~Mulzer.
\newblock Delaunay triangulations in $O(\text{sort}(n))$ time and more.
\newblock In {\em Proc. 50th Annual Symposium on Foundations of Computer Science (FOCS)}, pages 139--148, 2009.


\bibitem{c-wspd-08}
T.~M. Chan.
\newblock Well-separated pair decomposition in linear time?
\newblock {\em Inf. Process. Lett.}, 107(5):138--141, 2008.

\bibitem{cp-pl-0x}
T.~M. Chan and M.~P{\v a}tra{\c s}cu.
\newblock Transdichotomous results in computational geometry, {I}: {P}oint
  location in sublogarithmic time.
\newblock {\em SIAM J. Comput.}
\newblock To appear. Preliminary versions in: Proc. 47th IEEE Sympos. Found.
  Comput. Sci. (FOCS), 2006, pp. 325--332, 333--342.

\bibitem{cp-vd-07}
T.~M. Chan and M.~P{\v a}tra{\c s}cu.
\newblock Voronoi diagrams in {$n\cdot 2^{O(\sqrt{\lg\lg n})}$} time.
\newblock In {\em Proc. 39th ACM Sympos. Theory of Computing (STOC)}, pages
  31--39, 2007.

\bibitem{cs-splg-98}
J.~H. Conway and N.~J.~A. Sloane.
\newblock {\em Sphere packings, lattices and groups}.
\newblock Springer-Verlag, 3rd edition, 1998.

\bibitem{d-rysa-92}
O.~Devillers.
\newblock Randomization yields simple $O(n \log^* n)$ algorithms for
  difficult $\Omega(n)$ problems.
\newblock {\em Int. J. Comput. Geometry Appl.}, 2(1):97--111, 1992.

\bibitem{d-dh-02}
O.~Devillers.
\newblock The {D}elaunay hierarchy.
\newblock {\em {Internat. J. Found. Comput. Sci.}}, 13:163--180, 2002.

\end{thebibliography}
\end{document}